\newenvironment{proof}{\paragraph{Proof:}}{\hfill$\square$}
\newtheorem{theorem}{Theorem}[section]
\newtheorem{lemma}[theorem]{Lemma}
\newtheorem{corollary}[theorem]{Corollary}
\newtheorem{conjecture}[theorem]{Conjecture}
\newcommand{\Tr}{{\rm Tr}}
\newcommand{\F}{{\mathbb F}}
\date{}
\author{Sihem Mesnager\thanks{Department of Mathematics, University of Paris VIII,  University of Paris XIII, LAGA, UMR 7539, CNRS and Telecom ParisTech, France, email: smesnager@univ-paris8.fr}, Kwang Ho Kim\thanks{Institute of Mathematics, State Academy of Sciences,
Pyongyang, DPR Korea and PGItech Corp., Pyongyang, DPR Korea.}, Junyop Choe\thanks{Institute of Mathematics, State Academy of Sciences,
Pyongyang, DPR Korea.}, Chunming Tang\thanks{School of Mathematics and Information, China West
Normal University, Nanchong, Sichuan, 637002, China}}
\begin{document}
\title{On the Menezes-Teske-Weng's conjecture\thanks{In memory of G\'erard Cohen.}}

\maketitle

\begin{abstract}
In 2003, Alfred Menezes, Edlyn Teske and Annegret Weng presented
        a conjecture on properties of the solutions of a type of quadratic equation
        over the binary extension fields,
        which had been convinced by extensive experiments but the proof  was unknown until now. 
We prove that this conjecture is correct. Furthermore, using this proved conjecture, we have completely determined the null space of a class of linear polynomials.
\end{abstract}

{\bf Keywords} Binary finite fields, Elliptic curve, Discrete logarithm problem (DLP), Quadratic equation, Trace function.

\section{Introduction}

Let $p$ be a prime number and $n$ be a positive integer. The finite field with $q:=p^n$ elements is denoted by $\F_{p^n}$, which  can be viewed as an $n$-dimensional vector space over $\F_{p}$, and it is denoted by $\F_p^n$.
The trace function $\Tr: \mathbb {F}_{p^n} \rightarrow \mathbb {F}_{p}$ is defined as
\begin{displaymath}
\Tr_{p^n/p}(x) =\sum_{i=0}^{ n-1}
  x^{p^{i}}=x+x^{p}+x^{p^2}+\cdots+x^{p^{n-1}},
  \end{displaymath}
 which is called \textit{the absolute trace} of $x\in \mathbb {F}_{p^n}$,
 and also denoted by $\Tr^n_1(x)$. More general, the trace function $\Tr: \mathbb {F}_{q^n} \rightarrow \mathbb {F}_{q}$ is defined as
 \begin{displaymath}
\Tr_{q^n/q}(x) =\sum_{i=0}^{ n-1}
  x^{q^{i}}=x+x^{q}+x^{q^2}+\cdots+x^{q^{n-1}}.
  \end{displaymath}
  
Recall the transitivity property of the trace function:
  $\Tr_{k /m}\circ \Tr_{m/ n} = \Tr_{k/ n}$ provided that $k\vert m$ and $m\vert n$.\\
  
The problem of computing discrete logarithms in groups is fundamental to cryptography: it underpins the security of widespread cryptographic protocols for key exchange \cite{Diffie-Hellman1976}, public-key encryption \cite{Cramer-Shoup, ElGamal1984}, and digital signatures \cite{Johnson-Menezes-Vanstone, Kravitz, Schnorr}.

Let $E$ be an elliptic curve over a finite field $\mathbb {F}_{q}$, where $q=p^n$ and $p$ is prime. The elliptic curve discrete logarithm problem  is the following computational problem: Given points $P,Q \in E(\mathbb {F}_{q})$ to find an integer $a$, if it exists, such that $Q = aP$. This problem is the fundamental building block for elliptic curve cryptography and pairing- based cryptography, and has been a major area of research in computational number theory and cryptography for several decades.

In \cite{mtw03}, it has been considered that if for
$b\in\mathbb{F}_{2^n}^*$ there exist
$\gamma_1,\gamma_2\in\mathbb{F}_{2^n}$ such that
$b=(\gamma_1\gamma_2)^2$ then the Discrete Logarithm Problem (DLP) on the elliptic curve $E:
y^2+xy=x^3+ax^2+b$ over $\mathbb{F}_{2^n}$ $(n=6l)$ with
$\Tr_{2^n/2} (a)=0$ can be reduced to the DLP in a subgroup of the
divisor class group of an explicitly computable curve $C$ over
$\mathbb{F}_{2^l}$ with greater genus. By testing an algorithm which
decides whether such
$\gamma_1,\gamma_2$ exist for given $b\in\mathbb{F}_{2^n}^*$ and then computes them if there exist, they conceived a conjecture.\\
\begin{conjecture}\label{con}
    (Conjecture 15 of \cite{mtw03}) Let $q=2^l(l\in \mathds{N})$  and
     $\beta\in \mathbb{F}_{q^6}^*$ . Suppose that the quadratic
    equation  $u^2+(\beta ^{q^4-1}+\beta ^{q^2-1}+1)u+\beta ^{q^2-1}=0$
    has two solutions  $u_1,u_2$ in $\mathbb{F}_{q^6}$. Then  $u_1$ and  $u_2$
    satisfy  $u_i^{q^2+1}+u_i+1=0$.
\end{conjecture}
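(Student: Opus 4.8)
The plan is to recast the equation so that the action of $\mathrm{Gal}(\F_{q^6}/\F_{q^2})$ becomes visible. Put $b_i:=\beta^{q^{2i}}$ for $i=0,1,2$ and $S:=b_0+b_1+b_2$; since $S^{q^2}=S$ we have $S\in\F_{q^2}$. Multiplying the equation by $\beta=b_0$ and using $\beta^{q^4-1}=b_2/b_0$ and $\beta^{q^2-1}=b_1/b_0$ turns it into
\[
f(u):=b_0u^2+Su+b_1=0 .
\]
As $\beta\neq 0$ we have $b_0,b_1,b_2\neq 0$, so $f(0)=b_1\neq 0$ and $f(1)=b_0+S+b_1=b_2\neq 0$ (characteristic $2$), hence the two roots $u_1,u_2$ lie in $\F_{q^6}\setminus\{0,1\}$. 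The desired conclusion $u_i^{q^2+1}+u_i+1=0$ is literally $u_i^{q^2}u_i=u_i+1$, i.e. $\sigma(u_i)=\phi(u_i)$, where $\sigma\colon x\mapsto x^{q^2}$ is the Frobenius generating $\mathrm{Gal}(\F_{q^6}/\F_{q^2})$ and $\phi\colon x\mapsto (x+1)/x$.

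The crucial observation is that $\phi$ carries the root set of $f$ onto the root set of $g:=f^{\sigma}$, where $g(v):=b_1v^2+Sv+b_2$ is obtained by applying $\sigma$ to the coefficients of $f$. On one hand, applying $\sigma$ to $f(u_i)=0$ shows that $\sigma(u_1),\sigma(u_2)$ are the two roots of $g$. On the other hand, for any $u\neq 0$ a direct manipulation using $\phi(u)^2=1+u^{-2}$ and $b_1+S+b_2=b_0$ gives
\[
u^2\,g(\phi(u)) \;=\; b_1u^2+b_1+Su^2+Su+b_2u^2 \;=\; (b_1+S+b_2)u^2+Su+b_1 \;=\; b_0u^2+Su+b_1 \;=\; f(u);
\]
hence $\phi(u_1),\phi(u_2)$ are also roots of $g$, and they are distinct because $\phi$ is injective on $\F_{q^6}^{*}$. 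Therefore $\{\sigma(u_1),\sigma(u_2)\}=\{\phi(u_1),\phi(u_2)\}$ as sets.

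To finish, note that both $\sigma$ and $\phi$ have order $3$ on $\F_{q^6}\setminus\{0,1\}$ — for $\sigma$ this is $[\F_{q^6}:\F_{q^2}]=3$, and for $\phi$ one checks directly that $\phi^2(x)=1/(x+1)$ and $\phi^3(x)=x$ — and they commute, since $\sigma$ is a field automorphism: $\sigma(1+x^{-1})=1+\sigma(x)^{-1}$. Set $\psi:=\sigma^{-1}\circ\phi$. By the previous step $\psi$ maps the two-element set $\{u_1,u_2\}$ onto itself, so it induces a permutation $\tau$ of it; moreover $\psi^3=\sigma^{-3}\phi^3=\mathrm{id}$, so $\tau^3=\mathrm{id}$. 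But $\tau$ lies in a group of order $2$, whence $\tau=\tau^3=\mathrm{id}$. Thus $\psi(u_i)=u_i$, i.e. $\sigma(u_i)=\phi(u_i)=(u_i+1)/u_i$ for $i=1,2$, which rearranges to $u_i^{q^2+1}+u_i+1=0$.

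The only genuine difficulty is spotting the two structural ingredients: the normalization producing the symmetric shape $b_0u^2+Su+b_1=0$, and the order-$3$ Möbius map $\phi(x)=(x+1)/x$ matching the order-$3$ Frobenius. Once these are in hand, the hypothesis that there are \emph{two} solutions in $\F_{q^6}$ enters only to guarantee $u_1\neq u_2$ (equivalently $S\neq 0$), and the conclusion follows from the one-line group-theoretic argument above; the degenerate case $S=0$, where $f$ has a repeated root and so strictly fails the hypothesis, still satisfies the conclusion, since then $g$ has the single root $\sigma(u_1)=\phi(u_1)$.
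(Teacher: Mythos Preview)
Your proof is correct. Both you and the paper begin with the same normalization, multiplying through by $\beta$ to reach $b_0u^2+Su+b_1=0$ with $S=\Tr_{q^6/q^2}(\beta)\in\F_{q^2}$, and both then compare this quadratic with a companion quadratic obtained by a simple transformation. The difference lies in the choice of companion and the endgame. The paper takes the reciprocal equation $b_1v^2+Sv+b_0=0$, whose roots are $1/u_i$, and checks by direct substitution that $(u_i+1)^{q^2}$ are also roots; this forces a two-case matching $(u_i+1)^{q^2}=1/u_i$ or $(u_1+1)^{q^2}=1/u_2$, each case handled separately (the cross case requiring an extra $q^2$-th power and the relation $u^{q^6}=u$). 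You instead take the Frobenius conjugate $g=f^{\sigma}$, whose roots are $\sigma(u_i)$, and verify that $\phi(u_i)=(u_i+1)/u_i$ are also roots. Because your two maps $\sigma$ and $\phi$ commute and both have order $3$ on $\F_{q^6}\setminus\{0,1\}$, the composite $\psi=\sigma^{-1}\phi$ has order dividing $3$, so its restriction to the two-element root set must be the identity; this replaces the paper's case split by a one-line parity argument. The underlying mechanism is the same, but your packaging is cleaner and makes the role of the degree-$3$ Galois symmetry explicit.
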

\ \ \ They have verified the conjecture with 10000 randomly chosen
$\beta\in\mathbb{F}_{q^6}^*$ respectively for
$l=5,6,7,8,9,10,19,20,21,34,35,36,37$ \cite{mtw03}. The table below
has been built by their computer experiment.
\begin{table}{}
\caption{Experimental results \cite{mtw03}}
\begin{center}
    \begin{tabular}{|c|c|c|}
        \hline
        $l$  &  Number of equations solvable in $\mathbb{F}_{q^6}$  &
        Solutions satisfying $u^{q^2+1}+u+1=0$ \\
        \hline
        5 & 5088 & 5088\\
        6 & 4985 & 4985\\
        7 & 4924 & 4924\\
        8 & 5018 & 5018\\
        9 & 4955 & 4955\\
        10& 5013 & 5013\\
        19& 5028 & 5028\\
        20& 4993 & 4993\\
        21& 4967 & 4967\\
        34& 4956 & 4956\\
        35& 5001 & 5001\\
        36& 5053 & 5053\\
        37& 5100 & 5100\\
        \hline
    \end{tabular}
\end{center}
\end{table}
\ \ \ As for the conjecture, following lemma was only what they could
prove in \cite{mtw03}.
\begin{lemma}\label{lem01}
    (Lemma 16 of \cite{mtw03}) Assume $u_1,u_2\in \mathbb{F}_{q^6}$
    are the two solutions to
    $u^2+(\beta ^{q^4-1}+\beta ^{q^2-1}+1)u+\beta ^{q^2-1}=0$ and
    $ u_1^{q^2+1}+u_1+1=0$. Then also $ u_2^{q^2+1}+u_2+1=0$.
\end{lemma}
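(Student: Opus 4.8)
The plan is to reduce everything to a single algebraic identity between the coefficients of the quadratic. Write $s=\beta^{q^4-1}+\beta^{q^2-1}+1$ and $p=\beta^{q^2-1}$, so that $u_1,u_2$ are the roots of $u^2+su+p=0$. First I would record the elementary facts. Since $\beta\in\mathbb{F}_{q^6}^*$ we have $p\neq 0$; hence $u_1u_2=p\neq 0$ forces $u_1,u_2\neq 0$, and Vieta's formulas (in characteristic $2$) give $u_1+u_2=s$ and $u_2^{-1}=u_1/p$. I would also rewrite the hypothesis $u_1^{q^2+1}+u_1+1=0$ in the equivalent form $u_1^{q^2}=1+u_1^{-1}$, and note that the desired conclusion $u_2^{q^2+1}+u_2+1=0$ is, after multiplication by $u_2\neq 0$, equivalent to $u_2^{q^2}+1+u_2^{-1}=0$.

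The key identity I would establish is
\[
s^{q^2}\,p=s .
\]
This is a direct computation: raising $s$ to the power $q^2$ and using $\beta^{q^6}=\beta$ yields $s^{q^2}=\beta^{1-q^2}+\beta^{q^4-q^2}+1$; multiplying through by $p=\beta^{q^2-1}$ then gives $\beta^{0}+\beta^{q^4-1}+\beta^{q^2-1}=s$.

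Then comes the main step. Using $u_2=s+u_1$ and $u_2^{-1}=u_1/p$,
\[
u_2^{q^2}+1+u_2^{-1}=(s+u_1)^{q^2}+1+\frac{u_1}{p}=s^{q^2}+u_1^{q^2}+1+\frac{u_1}{p}.
\]
Substituting the hypothesis $u_1^{q^2}=1+u_1^{-1}$, this equals $s^{q^2}+u_1^{-1}+u_1/p=s^{q^2}+\frac{u_1^{2}+p}{p\,u_1}$. Finally, the relation $u_1^2+su_1+p=0$ gives $u_1^2+p=su_1$, so the fraction collapses to $s/p$ and the whole expression becomes $s^{q^2}+s/p=(s^{q^2}p+s)/p=0$ by the identity above. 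Hence $u_2^{q^2}+1+u_2^{-1}=0$, i.e.\ $u_2^{q^2+1}+u_2+1=0$, which is the claim.

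I do not anticipate a genuine obstacle here: this lemma is essentially bookkeeping once one spots the identity $s^{q^2}p=s$, and the only points requiring care are the reduction of the $\beta$-exponents modulo $q^6-1$ and the consistent use of characteristic-$2$ arithmetic (so that $u_1+u_2=s$ rather than $-s$, and the cross terms cancel in pairs). The real difficulty lies in Conjecture~\ref{con} itself — proving that mere solvability of the quadratic in $\mathbb{F}_{q^6}$ already forces the cubic relation $u^{q^2+1}+u+1=0$ — whereas the present lemma merely propagates that relation from one root to the other.
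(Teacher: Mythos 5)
Your proof is correct. The key identity $s^{q^2}p=s$ checks out: $s^{q^2}=\beta^{q^6-q^2}+\beta^{q^4-q^2}+1=\beta^{1-q^2}+\beta^{q^4-q^2}+1$ using $\beta^{q^6}=\beta$, and multiplying by $p=\beta^{q^2-1}$ recovers $1+\beta^{q^4-1}+\beta^{q^2-1}=s$; the rest is the Vieta bookkeeping you describe, and it goes through even in the degenerate case $s=0$ (where $u_1=u_2$ and the claim is vacuous). Note that the paper itself does not prove this lemma: it is quoted verbatim as Lemma~16 of \cite{mtw03} and used only as motivation, so there is no in-paper proof to match. The closest analogue is the machinery in the proof of Theorem~\ref{theo01}, which takes a different route: there the quadratic is rewritten as $\beta u^2+\Tr_{q^3/q}(\beta)u+\beta^q=0$, and one shows that both $1/u_i$ and $(u_i+1)^q$ solve the companion equation $\beta^q v^2+\Tr_{q^3/q}(\beta)v+\beta=0$, after which a case analysis on how these two pairs match up yields the much stronger conclusion that \emph{every} solution satisfies $u^{q+1}+u+1=0$. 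Your argument is more elementary and entirely symmetric in the two roots, but it only propagates the relation from one root to the other; the paper's Frobenius-pairing trick is what removes the hypothesis on $u_1$ altogether, which is the actual content of the conjecture. As you correctly observe, your lemma is the easy half.
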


In this paper we prove that Conjecture \ref{con} is correct. Let
 $q=2^s$ $(s\in \mathds{N})$ and $\beta\in \mathbb{F}_{q^3}^*$. In fact,
we show that a more general statement holds:  if equation
$u^2+(\beta ^{q^2-1}+\beta ^{q-1}+1)u+\beta ^{q-1}=0$ has a solution
$u\in\mathbb{F}_{q^3}$ then $u^{q+1}+u+1=0$. Furthermore we consider
 what conditions are needed in addition to $u^{q+1}+u+1=0$ in order
that the equation has a solution $u$ in $\mathbb{F}_{q^3}$.


\section{Proof of the conjecture of Menezes-Teske-Weng}

 In this section, we will prove the following result.
\begin{theorem}\label{theo01}
    Let $q=2^s$ $(s\in \mathds{N})$ and $\beta\in \mathbb{F}_{q^3}^*$.    If
    \begin{equation}
        \label{*} u^2+(\beta ^{q^2-1}+\beta ^{q-1}+1)u+\beta ^{q-1}=0
    \end{equation}
   for $u\in \mathbb{F}_{q^3}$, then
    \begin{equation}
        \label{*1} u^{q+1}+u+1=0.
    \end{equation}
    Furthermore, under the assumption $\Tr_{q^3/q} (\beta) \neq 0$,
    \eqref{*} holds for $u\in \mathbb{F}_{q^3}$
    if and only if
    \eqref{*1}  along with
    \begin{equation}
        \label{*2}\beta u+\beta ^q+\beta\in
        \mathbb{F}_q
    \end{equation}
    holds.
\end{theorem}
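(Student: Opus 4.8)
The plan is to first put \eqref{*} into a more convenient form. Multiplying \eqref{*} by $\beta$ turns it into the equivalent equation $\beta u^{2}+\Tr_{q^3/q}(\beta)\,u+\beta^{q}=0$; moreover, writing $c:=\beta^{q-1}$ — so that $\beta^{q^{2}-1}=c^{q+1}$ and, crucially, $c^{1+q+q^{2}}=\beta^{q^{3}-1}=1$, i.e. $c$ has norm $1$ over $\mathbb{F}_{q}$ — equation \eqref{*} reads $u^{2}+(c^{q+1}+c+1)u+c=0$, which I call $(\mathrm E)$. A one-line verification shows that a solution $u\in\mathbb{F}_{q^3}$ of $(\mathrm E)$ is never $0$, $1$, or $c$, since each of these would force $c=0$, contradicting $\beta\neq 0$.

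To obtain \eqref{*1} from \eqref{*}, apply the Frobenius $x\mapsto x^{q}$ to $(\mathrm E)$: using $c^{1+q+q^{2}}=1$ one finds that $u^{q}$ is a root of $C(X):=X^{2}+(1+c^{q}+c^{-1})X+c^{q}$. The heart of the proof is the observation that $(u+1)/u$ is \emph{also} a root of $C$: substituting $X=(u+1)/u$ and clearing denominators, the equation $C\big((u+1)/u\big)=0$ becomes precisely $(\mathrm E)$ again (after cancelling the nonzero factor $cu^{2}$). Since the two roots of $C$ have product $c^{q}$, its root set is $\{u^{q},\,c^{q}u^{-q}\}$, whence $(u+1)/u\in\{u^{q},\,c^{q}u^{-q}\}$. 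Suppose, for contradiction, that $(u+1)/u\neq u^{q}$; then $u^{q}=c^{q}u/(u+1)$, and applying Frobenius twice more to this identity — using $u^{q^{3}}=u$, $c^{q^{3}}=c$, and once more $c^{1+q+q^{2}}=1$ — produces two expressions for $u^{q^{2}}$, namely $u/(u+c)$ and $u/\big((c^{q+1}+c)u+c\big)$ (all denominators nonzero as $u\notin\{0,1,c\}$); their equality forces $c^{q+1}+c+1=0$. But then $1+c^{q}+c^{-1}=0$, so $C=(X+\sqrt{c^{q}})^{2}$ has just one root, contradicting that $(u+1)/u$ and $u^{q}$ are distinct roots of $C$. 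Therefore $(u+1)/u=u^{q}$, which is \eqref{*1}. (The same reasoning absorbs the case $\Tr_{q^3/q}(\beta)=0$, i.e. $c^{q+1}+c+1=0$: there $C$ already has a double root equal to $u^{q}$, so $(u+1)/u$ must equal it.)

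For the ``furthermore'' part, first note that since $\big(\beta u+\beta^{q}+\beta\big)^{q}=\beta^{q}u^{q}+\beta^{q^{2}}+\beta^{q}$, condition \eqref{*2} is equivalent to $\beta u+\beta^{q}u^{q}=\beta+\beta^{q^{2}}$; call this $(\mathrm E')$. If \eqref{*} holds, then \eqref{*1} holds by the previous paragraph, hence $u\neq 0$ and $u^{q}=(u+1)/u$; substituting this into $\beta u+\beta^{q}u^{q}$ and using $\beta u^{2}=\Tr_{q^3/q}(\beta)\,u+\beta^{q}$ from \eqref{*} collapses it to $\Tr_{q^3/q}(\beta)+\beta^{q}=\beta+\beta^{q^{2}}$, which is $(\mathrm E')$, hence \eqref{*2}. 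Conversely, if \eqref{*1} holds (so $u\neq 0$ and $u^{q}=(u+1)/u$) together with $(\mathrm E')$, substitute $u^{q}=(u+1)/u$ into $(\mathrm E')$ and clear the denominator $u$; the identity rearranges, in characteristic $2$, to exactly $\beta u^{2}+\Tr_{q^3/q}(\beta)\,u+\beta^{q}=0$, i.e. \eqref{*}. (In fact neither implication uses $\Tr_{q^3/q}(\beta)\neq 0$; that hypothesis is retained only because the statement is phrased with it.)

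The main obstacle is pinpointing the right auxiliary equation: seeing that $(u+1)/u$ satisfies the Frobenius-twisted quadratic $C$, and then discarding the parasitic root $c^{q}u^{-q}$ by transporting the hypothetical identity $u^{q}=c^{q}u/(u+1)$ around the three-element Frobenius orbit. Both steps rely on $c=\beta^{q-1}$ having norm $1$ — it is exactly this that makes the $q$-th power of $(\mathrm E)$ collapse to $C$ and that makes the threefold iteration close up. The remaining manipulations are elementary characteristic-$2$ arithmetic.
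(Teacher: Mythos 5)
Your proof is correct, and while it is structurally parallel to the paper's, the execution is genuinely different. Both arguments rewrite \eqref{*} (the paper as $\beta u^2+\Tr_{q^3/q}(\beta)u+\beta^q=0$, you as $u^2+(c^{q+1}+c+1)u+c=0$ with $c=\beta^{q-1}$ of norm $1$), produce an auxiliary quadratic with known root set, show that a simple transform of $u$ also satisfies it, and then eliminate the wrong matching of roots. The paper tracks both roots $u_1,u_2$ together with the reciprocal equation $\beta^q v^2+\Tr_{q^3/q}(\beta)v+\beta=0$ (roots $1/u_1,1/u_2$), proving that $(u_i+1)^q$ are also roots; in the cross-matched case it still extracts \eqref{*1} by iterating to $u_1^{q^2+1}+u_1^{q^2}+1=0$ and raising to the $q$-th power, observing only afterwards that this case forces $\Tr_{q^3/q}(\beta)=0$. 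You instead work with a single root and the Frobenius twist $C$ (roots $u^q$ and $c^qu^{-q}$), show $(u+1)/u$ is a root of $C$, and dispose of the cross-matched case by outright contradiction: the orbit computation forces $c^{q+1}+c+1=0$, i.e.\ $\Tr_{q^3/q}(\beta)=0$, which collapses $C$ to a double root and contradicts the assumed distinctness; your preliminary checks $u\neq 0,1,c$ cover every denominator that appears. For the ``furthermore'' part the paper substitutes $u=\Tr_{q^3/q}(\beta)v/\beta$ --- which presupposes $\Tr_{q^3/q}(\beta)\neq 0$ --- and appeals to the affine structure of solution sets of linearized polynomials, whereas your direct substitution of $u^q=(u+1)/u$ into the Frobenius-fixedness criterion for \eqref{*2} needs neither; your remark that the equivalence consequently holds even without the hypothesis $\Tr_{q^3/q}(\beta)\neq 0$ is a small but genuine strengthening of the stated theorem.
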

\textbf{Proof.} Suppose that \eqref{*} holds for $u\in \mathbb{F}_{q^3}$. Then, since $\beta\neq 0$,
we can see $u\neq0$ from \eqref{*}.\\

By exploiting the trace mapping $\Tr_{q^3/q}:\mathbb{F}_{q^3}\rightarrow
\mathbb{F}_q$ , the quadratic equation \eqref{*} can be rewritten as
\begin{equation}
\label{01} \beta u^2+\Tr_{q^3/q}(\beta)u+\beta ^q=0 .
\end{equation}

For the solutions $u_1,u_2$ to \eqref{01}, $\frac{1}{u_1}$ and
$\frac{1}{u_2}$ are two solutions to
\begin{equation}
\label{02} \beta ^q v^2+\Tr_{q^3/q}(\beta)v+\beta =0 .
\end{equation}

  Now, we will show that $(u_1+1)^q$ and $(u_2+1)^q$ are also solutions to
  \eqref{02}.\\
\ \ \
In fact, substituting  $(u_i+1)^q$ to the left side of
\eqref{02} gives
\begin{quote}
        $\beta ^q(u_i+1)^{2q}+\Tr_{q^3/q}(\beta)(u_i+1)^{q}+\beta$\\
        $=(\beta u_i^2+\beta)^q+\Tr_{q^3/q}(\beta)(u_i^q+1)+\beta$\\
        $=(\Tr_{q^3/q}(\beta)u_i+\beta^q+\beta)^q+\Tr_{q^3/q}(\beta)(u_i^q+1)+\beta$\\
        $=\Tr_{q^3/q}(\beta)^qu_i^q+\beta^{q^2}+\beta^q+\Tr_{q^3/q}(\beta)u_i^q+\Tr_{q^3/q}(\beta)+\beta$\\
        $=\Tr_{q^3/q}(\beta)u_i^q+\beta^{q^2}+\beta^q+\Tr_{q^3/q}(\beta)u_i^q+\Tr_{q^3/q}(\beta)+\beta$\\
        $=\beta^{q^2}+\beta^q+\Tr_{q^3/q}(\beta)+\beta=0,$\\
\end{quote}
where the first and third equalities was derived using properties of
finite fields and the second equality using the fact that $u_i$  is
a solution to \eqref{01} and the fourth and sixth equalities using
the
definition of the trace mapping.\\

 Hence, there are two possibilities. \\
\\
Case 1: $(u_1+1)^q=\frac{1}{u_1} , (u_2+1)^q=\frac{1}{u_2}$\\

    In this case, \eqref{*1} holds evidently.\\
\\
Case 2: $(u_1+1)^q=\frac{1}{u_2} , (u_2+1)^q=\frac{1}{u_1}$\\

    Substituting the first equality to the second equality,
    $(\frac{1}{(u_1+1)^q}+1)^q=\frac{1}{u_1}$ is obtained.
    From this equality and properties of finite fields, $(u_1+1)^{q^2+1}=u_1$
    or equivalently
    \begin{equation}
        \label{03} u_1^{q^2+1}+u_1^{q^2}+1=0
    \end{equation}
is followed. \\

Powering  $q-th$ to the both sides of \eqref{03},
 we get $u_1^{q^3+q}+u_1^{q^3}+1=0$.
Since  $u_1^{q^3}=u_1$ from $u_1\in \mathbb{F}_{q^3}$, it follows
that $u_1^{q+1}+u_1+1=0$. Similarly $u_2^{q+1}+u_2+1=0$.
Therefore \eqref{*1} holds in Case 2.

 On the
other hand, from $u_1^{q+1}+u_1+1=0$ and the first condition of Case
2, we can get $u_2=\frac{1}{u_1^q+1}=\frac{u_1}{u_1^{q+1}+u_1}=u_1$.
Hence $u_1$ and $u_2$ equal $\beta ^\frac{q-1}{2}$ by the well known
property of solutions
of quadratic equations and  $Tr_{q^3|q}(\beta)=0$ is followed from \eqref{*}.

In other words, Case 2 represents $Tr_{q^3|q}(\beta)=0$.\\

Next, we will prove \eqref{*2} under the condition $Tr_{q^3|q}(\beta)\neq 0$.\\

  For a solution $u$ to \eqref{*}, setting $u=\frac{Tr_{q^3|q}}{\beta}v$
for some $v\in F_q$, from \eqref{01} and \eqref{*1}, we get
$v^2+v=\frac{\beta ^{q+1}}{Tr_{q^3|q}(\beta)^2}$ and $v^{q+1}+\frac{\beta
^q}{Tr_{q^3|q}(\beta)}v=\frac{\beta ^{q+1}}{Tr(\beta
^2)}$ respectively.\\
 So, $v^2+v=v^{q+1}+\frac{\beta
^q}{Tr_{q^3|q}(\beta)}v$, i.e., $v^q+v+(1+\frac{\beta ^q}{Tr_{q^3|q}(\beta)})=0$
since $u,v\neq 0$. Therefore,
$(\frac{\beta}{Tr_{q^3|q}(\beta)}u)^q+(\frac{\beta}{Tr_{q^3|q}(\beta)}u)+(\frac{\beta
^{q^2}+\beta}{Tr_{q^3|q}(\beta)})=0$, i.e.
\begin{equation}
    \label{04} (\beta u)^q+(\beta u)=\beta +\beta ^{q^2}.
\end{equation}

By substitution, it is easily checked that $u_0=\beta ^{q-1}+1$ is a
solution to \eqref{04}. From the well known property of linearized
polynomial (\cite{m1993}), the set of all solutions to \eqref{04}
are $u_0+\frac{1}{\beta}\mathbb{F}_q$. In other words, if $u$ is a
solution to $u^2+(\beta ^{q^2-1}+\beta^{q-1}+1)u+\beta^{q-1}=0$ then
$\beta u+\beta
u_0=\beta u+\beta ^q+\beta \in \mathbb{F}_q$.\\

Conversely, suppose that $\beta u+\beta ^q+\beta \in \mathbb{F}_q$
and $u^{q+1}+u+1=0$.
 Then from $\beta u+\beta ^q+\beta
\in \mathbb{F}_q$, $(\beta u+\beta ^q+\beta)^q=\beta u+\beta
^q+\beta$ and so $(\beta u)^q+\beta u+(\beta ^{q^2}+\beta)=0$, i.e.,
\begin{equation}
   \label{05} u^q+\frac{u}{\beta
   ^{q-1}}+(\frac{Tr_{q^3|q}(\beta)}{\beta^q}+1)=0.
\end{equation}
Substituting $u^q=\frac{u+1}{u}$ which is obtained from
$u^{q+1}+u+1=0$ to \eqref{05}, we get $\frac{1}{u}+\frac{u}{\beta
^{q-1}}+\frac{Tr_{q^3|q}(\beta)}{\beta ^q}=0$ i.e. $u^2+(\beta
^{q^2-1}+\beta ^{q-1}+1)u+\beta ^{q-1}=0$. $\Box$

\begin{corollary}\label{cor01}
    The Menezes-Teske-Weng Conjecture is correct.
\end{corollary}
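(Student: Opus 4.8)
The plan is to deduce Corollary~\ref{cor01} as an immediate specialization of Theorem~\ref{theo01}. Conjecture~\ref{con} is phrased over $\mathbb{F}_{q^6}$ with $q=2^l$, while Theorem~\ref{theo01} is phrased over $\mathbb{F}_{q^3}$ with $q=2^s$; so the first step is the change of variable $s:=2l$, i.e. to apply the theorem with its ``$q$'' taken to be $Q:=q^2=2^{2l}$. Under this substitution $\mathbb{F}_{Q^3}$ is literally $\mathbb{F}_{q^6}$, the coefficient $\beta^{Q^2-1}+\beta^{Q-1}+1$ becomes $\beta^{q^4-1}+\beta^{q^2-1}+1$, and the constant term $\beta^{Q-1}$ becomes $\beta^{q^2-1}$; hence equation \eqref{*} of the theorem is exactly the quadratic equation of Conjecture~\ref{con}, and the conclusion \eqref{*1}, namely $u^{Q+1}+u+1=0$, is exactly $u^{q^2+1}+u+1=0$.

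Next I would invoke the first (unconditional) assertion of Theorem~\ref{theo01}: whenever \eqref{*} has a solution $u\in\mathbb{F}_{Q^3}$, that solution satisfies \eqref{*1}. In the setting of the conjecture the equation is assumed to possess two solutions $u_1,u_2\in\mathbb{F}_{q^6}$, so applying this assertion to each of them at once yields $u_1^{q^2+1}+u_1+1=0$ and $u_2^{q^2+1}+u_2+1=0$. Note that this is genuinely stronger than Lemma~\ref{lem01} (Lemma~16 of \cite{mtw03}), which only propagated the property from $u_1$ to $u_2$: here neither solution has to be singled out, and the extra hypothesis $\Tr_{Q^3/Q}(\beta)\neq 0$ needed for the ``furthermore'' part of Theorem~\ref{theo01} plays no role.

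There is no real obstacle here beyond the bookkeeping of the change of variable. The only points to check are that the exponent identities hold on the nose, that is, that $q^2-1$ and $q^4-1$ are indeed $Q-1$ and $Q^2-1$ and that $\mathbb{F}_{q^6}=\mathbb{F}_{(q^2)^3}$, both of which are immediate. Thus Corollary~\ref{cor01} follows in one line from Theorem~\ref{theo01}.
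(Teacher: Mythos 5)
Your proposal is correct and is exactly the paper's argument: the paper also derives the corollary by setting $s=2l$ (so that the theorem's $q$ becomes $2^{2l}=q^2$ in the conjecture's notation) and reading off the conclusion from the unconditional first part of Theorem~\ref{theo01}. Your version simply spells out the bookkeeping more explicitly.
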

\textbf{Proof.} Letting $s=2l$ and $q\acute{}=2^l$ in the setting of
Theorem \ref{theo01} gives the corollary.\ \ \ $\Box$
\\

Theorem \ref{theo01} presents a necessary and sufficient condition
for the quadratic equation \eqref{*} to have two solutions in
$\mathbb{F}_{q^3}$. On the other hand, a classical result in the
theory of finite fields says that the quadratic equation \eqref{*}
has two solutions in $\mathbb{F}_{q^3}$ if and only if
$Tr_{q^3|2}(\frac{\beta ^{q+1}}{Tr_{q^3|q}(\beta)^2})=0$
\cite{m1993}.

In the remainder of this section, we show that really the two
conditions $u^{q+1}+u+1=0$ and $\beta u+\beta ^q+\beta\in
\mathbb{F}_q$ can be merged to one condition $Tr_{q^3|2}(\frac{\beta
^{q+1}}{Tr_{q^3|q}(\beta)^2})=0$.


In fact, assume that $u^{q+1}+u+1=0$ and $\beta u+\beta ^q+\beta\in
\mathbb{F}_q$. Let $\alpha=\beta u+\beta ^q+\beta \in
\mathbb{F}_q$.Then $u=\beta ^{q-1}+1+\frac{\alpha}{\beta}$ and
substituting it to $u^{q+1}+u+1=0$, we get $(\beta
^{q-1}+1+\frac{\alpha}{\beta})^{q+1}+\beta
^{q-1}+\frac{\alpha}{\beta}=0,$ i.e. $(\beta
^q+\beta+\alpha)^{q+1}+\beta ^{2q}+\alpha \beta^q=0$. Since
$\alpha\in \mathbb{F}_q$, we obtain $(\beta^{q^2}+\beta
^q+\alpha)(\beta ^q+\beta+\alpha)+\beta ^{2q}+\alpha \beta^q=0$ or
$\alpha^2+Tr_{q^3|q}(\beta)\alpha+Tr(\beta^{q+1})=0$. Dividing the both
sides of the obtained expression by $Tr_{q^3|q}(\beta)^2$ gives
$(\frac{\alpha}{Tr_{q^3|q}(\beta)})^2+\frac{\alpha}{Tr_{q^3|q}(\beta)}=\frac{Tr(\beta^{q+1})}{Tr_{q^3|q}(\beta)^2}$.
So, $Tr_{q|2}(\frac{Tr_{q^3|q}(\beta
^{q+1})}{Tr_{q^3|q}(\beta)^2})=Tr_{q|2}((\frac{\alpha}{Tr_{q^3|q}(\beta)})^2+\frac{\alpha}{Tr_{q^3|q}(\beta)})=Tr_{q|2}((\frac{\alpha}{Tr_{q^3|q}(\beta)})^2)+Tr_{q|2}(\frac{\alpha}{Tr_{q^3|q}(\beta)})=0$.
From the transitivity of trace function, $Tr_{q^3|2}(\frac{\beta
^{q+1}}{Tr_{q^3|q}(\beta)^2})=Tr_{q|2}(Tr_{q^3|q}(\frac{\beta
^{q+1}}{Tr_{q^3|q}(\beta)^2}))=Tr_{q|2}(\frac{Tr_{q^3|q}(\beta
^{q+1})}{Tr_{q^3|q}(\beta)^2})=0$.

Conversely, if $Tr_{q^3|2}(\frac{\beta
^{q+1}}{Tr_{q^3|q}(\beta)^2})=0$ or equivalently
$Tr_{q|2}(\frac{Tr_{q^3|q}(\beta ^{q+1})}{Tr_{q^3|q}(\beta)^2})=0$,
then the equation (on $\alpha$)
$\alpha^2+Tr_{q^3|q}(\beta)\alpha+Tr_{q^3|q}(\beta^{q+1})=0$ has two solutions
$\alpha_1, \alpha_2$ in $\mathbb{F}_q$. Then $u_i=\beta
^{q-1}+1+\frac{\alpha_i}{\beta}(i=1,2)$ are the solutions to
$u^2+(\beta ^{q^2-1}+\beta ^{q-1}+1)u+\beta ^{q-1}=0$ and from
theorem \ref{theo01} we get $u^{q+1}+u+1=0$ and $\beta u+\beta
^q+\beta\in
\mathbb{F}_q$.\\

\section{An application of the Menezes-Teske-Weng's Conjecture to linear polynomials}

In this section, we shall study a class of linear polynomials, which are related to the equation $ u^2+(\beta ^{q^2-1}+\beta ^{q-1}+1)u+\beta ^{q-1}=0$ discussed in the previous section.

For $u\in \mathbb F_{q^3}$, we define a linear polynomial with variable $\beta$ as
\begin{align}\label{eq:L_u}
L_{u}(\beta):=u \beta^{q^2}+ (u+1)\beta^q+ (u^2+u)\beta.
\end{align}

To study the above linear polynomials, we begin with some lemmas on finite fields.
\begin{lemma}\label{lem:q+1}
Let $u$ be an element in the algebraic closure of $\mathbb F_q$ with  $u^{q+1}+u+1=0$. Then,

(i) $u^{q^2+q+1}=1$ and $u\in \mathbb F_{q^3}$;

(ii) $u^{q^2}=\frac{u^q+1}{u^q}=\frac{1}{u+1}$;

(iii) there exists $\beta\in \mathbb F_{q^3}$ such that $\beta^{q-1}=u^2$.
\end{lemma}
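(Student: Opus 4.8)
The plan is to extract from the relation $u^{q+1}+u+1=0$ a closed form for $u^q$, then iterate the Frobenius map $x\mapsto x^q$; this produces (i) and (ii) simultaneously, and (iii) follows from the standard description of the image of the power map $\beta\mapsto\beta^{q-1}$ on $\mathbb F_{q^3}^{*}$.

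First, the relation $u^{q+1}+u+1=0$ forces $u\neq 0$ (otherwise $1=0$), so dividing by $u$ gives
\[
u^{q}=1+\frac1u=\frac{u+1}{u}.
\]
Applying Frobenius again and simplifying in characteristic $2$,
\[
u^{q^{2}}=(u^{q})^{q}=1+\frac1{u^{q}}=1+\frac{u}{u+1}=\frac1{u+1},
\]
and once more
\[
u^{q^{3}}=(u^{q^{2}})^{q}=1+\frac1{u^{q^{2}}}=1+(u+1)=u .
\]
Thus $u^{q^{3}}=u$, i.e.\ $u\in\mathbb F_{q^{3}}$, and multiplying the three displayed identities yields
\[
u^{q^{2}+q+1}=u\cdot u^{q}\cdot u^{q^{2}}=u\cdot\frac{u+1}{u}\cdot\frac1{u+1}=1 ,
\]
which is the norm-one statement of (i). For (ii), we have already obtained $u^{q^{2}}=\frac1{u+1}$; moreover $u^{q}+1=\frac{u+1}{u}+1=\frac1u$, so $\frac{u^{q}+1}{u^{q}}=\frac{1/u}{(u+1)/u}=\frac1{u+1}$, and the three expressions coincide.

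For (iii), recall that on the cyclic group $\mathbb F_{q^{3}}^{*}$, of order $q^{3}-1$, the map $\beta\mapsto\beta^{q-1}$ has image the unique subgroup of order $\frac{q^{3}-1}{\gcd(q-1,\,q^{3}-1)}=\frac{q^{3}-1}{q-1}=q^{2}+q+1$, that is, $\{x\in\mathbb F_{q^{3}}^{*}:x^{q^{2}+q+1}=1\}$, the kernel of the norm $N_{\mathbb F_{q^{3}}/\mathbb F_{q}}$. Since $u\neq 0$ and, by (i), $u^{q^{2}+q+1}=1$, we also have $(u^{2})^{q^{2}+q+1}=1$; hence $u^{2}$ lies in this image, and any preimage $\beta\in\mathbb F_{q^{3}}$ satisfies $\beta^{q-1}=u^{2}$.

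No essential difficulty arises: the only points requiring care are the repeated use of $2x=0$ when simplifying the Frobenius iterates, and, for (iii), correctly identifying the image of $\beta\mapsto\beta^{q-1}$ with the norm-one subgroup of $\mathbb F_{q^{3}}^{*}$. Everything else is a direct computation starting from $u^{q}=1+u^{-1}$.
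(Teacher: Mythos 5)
Your proof is correct and follows essentially the same route as the paper: both derive $u^q=(u+1)/u$ from the hypothesis, iterate Frobenius to get $u^{q^2}=1/(u+1)$ and $u^{q^3}=u$, conclude $u^{q^2+q+1}=1$, and obtain (iii) by identifying the norm-one elements of $\mathbb F_{q^3}^*$ with the image of $\beta\mapsto\beta^{q-1}$ (the paper writes $u=\alpha^{q-1}$ and sets $\beta=\alpha^2$, while you apply the same fact directly to $u^2$). The only cosmetic difference is the order in which $u\in\mathbb F_{q^3}$ and the norm computation are established.
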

\begin{proof}
(i) Let $u^{q+1}+u+1=0$. Then,
\begin{align*}
u^{q^2+q+1}=& u u^{(q+1)q}\\
=& u (u+1)^q\\
=& u^{q+1}+u\\
=& 1.
\end{align*}

Thus, $u^{q^3-1}=u^{(q-1)(q^2+q+1)}=1$ and $u^{q^3}=u$. One gets that $u\in \mathbb F_{q^3}$.

(ii) One has
\begin{align*}
u^{q^2}=& \frac{u^{q^2+q}}{u^q}\\
=& \frac{u^{(q+1)q}}{u^q}.
\end{align*}
From $u^{q+1}+u+1=0$, $u^{q^2}=\frac{u^q+1}{u^q}$. Thus, $u^{q^2}=\frac{u^{q+1}+u}{u^{q+1}}=\frac{1}{u+1}$.

(iii) Since $u^{q^2+q+1}=1$, there exists $\alpha \in \mathbb F_{q^3}$ such that $u=\alpha^{q-1}$. Thus, one can choose $\beta= \alpha^{2}$.

\end{proof}

\begin{lemma}\label{lem:2eq}
Let $q=2^s$.

(i) If $s \equiv 0 \text{ or } 2 \pmod 3 $, then  there is no $u\in \mathbb F_{q^3}$ such that
$
\begin{cases}
u^{q+1}+u+1= 0
\cr u^3+u+1= 0
\end{cases}
$;

(ii) If $s \equiv 1 \pmod 3 $ and $u^3+u+1= 0$, then  $u\in \mathbb F_{q^3}$ and $u^{q+1}+u+1= 0$. Moreover, $u^2$ and $u+1$ are linearly  independent  over $\mathbb F_q$.
\end{lemma}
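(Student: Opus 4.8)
The plan is to use the fact that $x^{3}+x+1$ is irreducible over $\mathbb{F}_{2}$ (it has no root there, since $0^{3}+0+1=1^{3}+1+1=1$), so any $u$ with $u^{3}+u+1=0$ generates $\mathbb{F}_{8}=\mathbb{F}_{2^{3}}$ over $\mathbb{F}_{2}$. In particular such a $u$ lies in $\mathbb{F}_{2^{3s}}=\mathbb{F}_{q^{3}}$ for \emph{every} $s$, while $u\notin\mathbb{F}_{2}$ and $u^{2}+u+1\neq 0$: indeed $u^{2}+u+1=0$ would give, after multiplying by $u+1$, the relation $u^{3}=1$, which together with $u^{3}=u+1$ forces $u=0$, a contradiction. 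Likewise $u+1\neq 0$ and $u^{2}\neq 0$.

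Next I would compute $u^{q}=u^{2^{s}}$. Since $u\in\mathbb{F}_{8}$ we have $u^{2^{3}}=u$, hence $u^{q}=u^{2^{r}}$ where $r\in\{0,1,2\}$ with $r\equiv s\pmod 3$; using $u^{3}=u+1$ one reduces $u^{4}=u^{2}+u$ and $u^{5}=u^{2}+u+1$. Thus $u^{q+1}+u+1$ equals $u^{2}+u+1$ when $s\equiv 0$, equals $u^{3}+u+1=0$ when $s\equiv 1$, and equals $u^{5}+u+1=u^{2}$ when $s\equiv 2\pmod 3$. Since $u\neq 0$ and $u^{2}+u+1\neq 0$, the first and third cases prove (i), and the middle case, together with $u\in\mathbb{F}_{q^{3}}$ noted above, proves the first assertion of (ii).

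It remains to establish the linear independence in (ii). Suppose $a u^{2}+b(u+1)=0$ with $a,b\in\mathbb{F}_{q}$. If $a=0$ then $b(u+1)=0$ with $u+1\neq 0$ forces $b=0$; if $a\neq 0$, then $\mu:=b/a=u^{2}/(u+1)$ lies in $\mathbb{F}_{8}$ (as $u\in\mathbb{F}_{8}$, $u+1\neq 0$) and also, by hypothesis, in $\mathbb{F}_{q}$. Because $s\equiv 1\pmod 3$ we have $\gcd(3,s)=1$, hence $\mathbb{F}_{8}\cap\mathbb{F}_{q}=\mathbb{F}_{2}$, so $\mu\in\{0,1\}$; but $\mu=0$ gives $u=0$ and $\mu=1$ gives $u^{2}+u+1=0$, both impossible. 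Hence $a=b=0$, proving the independence. The only mildly delicate point in the whole argument is the identification $\mathbb{F}_{8}\cap\mathbb{F}_{2^{s}}=\mathbb{F}_{2^{\gcd(3,s)}}$ and keeping the three residues of $s$ modulo $3$ separate; everything else is a short direct computation built on $u^{3}=u+1$.
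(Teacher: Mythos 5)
Your proof is correct and follows essentially the same route as the paper's: place the root of $u^3+u+1$ in $\mathbb{F}_{8}\subseteq\mathbb{F}_{q^3}$, reduce the Frobenius $u\mapsto u^{q}$ according to $s\bmod 3$, and use $\mathbb{F}_{8}\cap\mathbb{F}_{q}=\mathbb{F}_{2}$ for the linear independence. The only (harmless) variations are that you evaluate $u^{q+1}+u+1$ directly in each residue class instead of first deriving $u^{q}=u^{2}$ from the assumed system, and that you treat the $a=0$ case of the dependence relation explicitly.
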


\begin{proof}
(i) Assume  that there exists $u\in \mathbb F_{q^3}$ such that
$
\begin{cases}
u^{q+1}+u+1= 0
\cr u^3+u+1= 0
\end{cases}
$.
One has $u^{q+1}=u^3$ and $u^{q}=u^2$. From $u^3+u+1=0$, $u\in \mathbb F_{2^3}$ and $u^{2^3}=u$.

 If $s=3k$, then $u^2=u^q=u^{2^{3k}}=u$. Thus, $u=1$, which  contradicts with $u^3+u+1= 0$.

 If $s=3k+2$, then $u^2=u^q=u^{2^{3k+2}}=u^4$. Thus, $u=1$, which contradicts with $u^3+u+1= 0$.

 (ii) Let  $s \equiv 1 \pmod 3 $ and $u^3+u+1= 0$. Thus, $u\in \mathbb F_{2^3}\subseteq \mathbb F_{q^3}$ and $u^{2^3}=u$. Write  $s=3k+1$. Then,
 $u^{q+1}=uu^{2^{3k+1}}=u^3=u+1$ and $u^{q+1}+u+1=0$.

 Assume that $u^2$ and $u+1$ are linearly  dependent  over $\mathbb F_q$. One obtains that
 $\frac{u+1}{u^2}\in \mathbb F_q$. By $u^3+u+1= 0$, $u\in \mathbb F_q \cap \mathbb F_{2^3} $ and $u^q=u^{2^3}=u$.
  Let $s=3k+1$. One has $u=u^q=u^{2^{3k+1}}=u^2$ and $u=1$, which contradicts with $u^3+u+1=0$.

\end{proof}

\begin{lemma}\label{lem:2beta}
Let $q=2^s$ and $u\in \mathbb F_{q^3}$  with  $u^{q+1}+u+1=0$. Let $\beta_1=u^q$ and $\beta_2\in \mathbb F_{q^3}$ with $\beta_2^{q-1}=u^2$. Then,
  $\beta_1$ and $\beta_2$ are linearly  independent over $\mathbb F_q$, if and only if, $s \equiv 0 \text{ or } 2 \pmod 3 $, or,  $s \equiv 1 \pmod 3 $ and
$ u^3+u+1\neq 0$.

\end{lemma}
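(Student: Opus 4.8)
The plan is to collapse the linear-dependence statement into a single condition on the multiplicative order of $u$ and then settle that condition by a short gcd computation modulo $7$. First, since $u^{q+1}+u+1=0$ forces $u\neq 0$, both $\beta_1=u^q$ and $\beta_2$ (where $\beta_2^{q-1}=u^2$) are nonzero, so $\beta_1$ and $\beta_2$ are linearly dependent over $\mathbb{F}_q$ exactly when $\beta_1/\beta_2\in\mathbb{F}_q$, i.e. when $(\beta_1/\beta_2)^{q-1}=1$. A direct computation gives
\[
(\beta_1/\beta_2)^{q-1}=\frac{(u^{q})^{q-1}}{u^{2}}=u^{q^2-q-2},
\]
and, using $u^{q^2+q+1}=1$ from Lemma \ref{lem:q+1}(i), the right-hand side equals $u^{-(2q+3)}$. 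Hence $\beta_1,\beta_2$ are linearly dependent over $\mathbb{F}_q$ if and only if $u^{2q+3}=1$, which, combined again with $u^{q^2+q+1}=1$, is equivalent to $\mathrm{ord}(u)\mid\gcd(2q+3,\,q^2+q+1)$.

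Next I would evaluate this gcd. Since $q=2^s$ is even, $2q+3$ is odd, and the identity $4(q^2+q+1)-(2q-1)(2q+3)=7$ shows that $\gcd(2q+3,q^2+q+1)=\gcd(2q+3,7)$, which is $7$ when $7\mid 2q+3$ (equivalently $q\equiv 2\pmod 7$) and $1$ otherwise. Since $2^s\bmod 7$ cycles through $2,4,1$ with period $3$, the value $7$ occurs precisely when $s\equiv 1\pmod 3$.

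Then comes the case split. If $s\equiv 0$ or $2\pmod 3$, the gcd is $1$, so $u^{2q+3}=1$ would give $u=1$, impossible since then $u^{q+1}+u+1=1\neq 0$; thus $\beta_1,\beta_2$ are always linearly independent. If $s\equiv 1\pmod 3$, the gcd is $7$, so $u^{2q+3}=1$ if and only if $u^7=1$, which (as $\mathbb{F}_{2^3}^{*}$ is exactly the group of $7$-th roots of unity in $\overline{\mathbb{F}_2}$) means $u\in\mathbb{F}_{2^3}$. Writing $s=3k+1$ one has $u^{q}=u^{2}$ for every $u\in\mathbb{F}_{2^3}$, so on $\mathbb{F}_{2^3}$ the hypothesis $u^{q+1}+u+1=0$ reads $u^3+u+1=0$; consequently, under the standing hypothesis, the condition $u\in\mathbb{F}_{2^3}$ is equivalent to $u^3+u+1=0$, and therefore $\beta_1,\beta_2$ are linearly independent if and only if $u^3+u+1\neq 0$. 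Putting the two cases together yields the stated equivalence.

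The exponent bookkeeping and the gcd identity are routine; the delicate step — and the real content — is the last one, where the purely order-theoretic conclusion ``$u\in\mathbb{F}_{2^3}$'' must be matched to the \emph{specific} cubic $u^3+u+1$ rather than the other irreducible cubic over $\mathbb{F}_2$. That matching is exactly what the hypothesis $u^{q+1}+u+1=0$ provides once $s\equiv 1\pmod 3$ makes the $q$-power Frobenius act as squaring on $\mathbb{F}_{2^3}$; alternatively, one may simply invoke Lemma \ref{lem:2eq} at this point.
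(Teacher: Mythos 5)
Your proof is correct, and it takes a genuinely different route from the paper's. Both arguments start from the same reduction (two nonzero elements of $\mathbb F_{q^3}$ are $\mathbb F_q$-dependent iff their ratio lies in $\mathbb F_q$, i.e.\ its $(q-1)$-st power is $1$), but from there the paper manipulates the defining equations directly: assuming $\beta_2=a\beta_1$ it derives $u^2=u^{q^2-q}$, invokes Lemma \ref{lem:q+1}(ii) to get $u^{2q+2}=u^q+1$, combines this with $u^{q+1}=u+1$ to conclude $u^q=u^2$ and hence $u^3+u+1=0$, and then rules out $s\equiv 0,2\pmod 3$ by examining $u^{2^r}$; the converse is checked separately. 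You instead convert the whole question into the single divisibility condition $\mathrm{ord}(u)\mid\gcd(2q+3,\,q^2+q+1)$ and settle it with the identity $4(q^2+q+1)-(2q-1)(2q+3)=7$ together with the fact that $2^s\bmod 7$ has period $3$. Your version is arguably more illuminating --- it explains at once why the prime $7$ (equivalently $\mathbb F_{2^3}$) and the residue of $s$ modulo $3$ govern the answer, and it yields both directions of the equivalence simultaneously rather than in two separate passes. The one delicate point, which you correctly identify and resolve, is that $u^7=1$ only places $u$ in $\mathbb F_{2^3}^*$, and one must still use the standing hypothesis $u^{q+1}+u+1=0$ (with $u^q=u^2$ on $\mathbb F_{2^3}$ when $s\equiv 1\pmod 3$) to single out the specific cubic $u^3+u+1$; the paper reaches the same conclusion by deriving $u^q=u^2$ algebraically before ever locating $u$ in $\mathbb F_{2^3}$.
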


\begin{proof}
From Part (iii) of Lemma \ref{lem:q+1}, there exists $\beta_2\in \mathbb F_{q^3}$ with $\beta_2^{q-1}=u^2$.
The statement of this lemma is equivalent to the following claim:

 $\beta_1$ and $\beta_2$ are linearly dependent over $\mathbb F_q$, if and only if, $s \equiv 1 \pmod 3 $ and
$ u^3+u+1= 0$.

Suppose that $\beta_1$ and $\beta_2$ are linearly  dependent over $\mathbb F_q$. Then, there exists $a\in \mathbb F_q^*$ such that
$\beta_2=a \beta_1$. Thus,
\begin{align*}
u^2=& \beta_2^{q-1}\\
=& (a u^q)^{q-1}\\
=& u^{q^2-q}.
\end{align*}
By Part (ii) of Lemma \ref{lem:q+1}, one obtains $u^2=\frac{u^q+1}{u^{2q}}$ and $u^{2q+2}=u^q+1$.
Thus, $u^q+1=u^2+1$ from $u^{q+1}+u+1=0$. One gets $u^q=u^2$ and $u^3+u+1=0$.
Note that $u\in \mathbb F_{q^3} \cap \mathbb F_{2^3}$. So $u^{2^3}=1$. Write $s=3k+r$ with $r\in \{0,1,2\}$.
Then, $u^q=u^{2^{3k} \cdot 2^{r}}= u^{2^r}$. Assume that $r=0 \text{ or } 2$. One gets $u=1$,
which contradicts with $u^3+u+1=0$. Hense,
$s=3k+1$.

Conversely, suppose that $s \equiv 1 \pmod 3 $ and
$ u^3+u+1= 0$. By Lemma \ref{lem:2eq}, $u^{q+1}+u+1=0$. Thus, $u^q=u^2$ and $\beta_1^{q-1}=u^2$. So $\left (\frac{\beta_2}{\beta_1} \right )^{q-1}=1$.
Thus, $\beta_2=a \beta_1$ with $a \in \mathbb F_q^*$. It completes the proof.
\end{proof}

\begin{theorem}\label{thm:L_u}
Let $q=2^s$, $u\in \mathbb F_{q^3}$ and $L_u(\beta)$ be the linear polynomial defined as in Equation (\ref{eq:L_u}). More precisely, we have

(i) if $u^{q+1}+u+1\neq 0$, then $L_u(\beta)$ is a linear permutation polynomial;

(ii) if $u^{q+1}+u+1= 0$, then  $\mathbf{dim}_{\mathbb F_q}(\mathbf{Ker} (L_u))=2$;

(iii) if $s \equiv 0 \text{ or } 2 \pmod 3 $ and  $u^{q+1}+u+1= 0$, or,  $s \equiv 1 \pmod 3 $ and
$\begin{cases}
u^{q+1}+u+1= 0
\cr u^3+u+1\neq 0
\end{cases}$, then  $\mathbf{Ker} (L_u)=\{x\beta_1+ y \beta_2: x,y \in \mathbb F_q\}$,
where $\beta_1, \beta_2 \in \mathbb F_{q^3}$ with $\beta_1=u^q$ and $\beta_2^{q-1}=u^2$;

(iv) if $s \equiv 1 \pmod 3 $ and $u^3+u+1= 0$, then,
$\mathbf{Ker} (L_u)=\{x u^2+ y (u+1): x,y \in \mathbb F_q\}$.
\end{theorem}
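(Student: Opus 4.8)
The plan is to recognize $L_u$ as a disguised form of the quadratic equation \eqref{*} of Section~2. Expanding $\Tr_{q^3/q}(\beta)=\beta+\beta^q+\beta^{q^2}$ in \eqref{01} gives, for every $\beta\in\mathbb F_{q^3}$, the identity $L_u(\beta)=\beta u^2+\Tr_{q^3/q}(\beta)u+\beta^q$; hence for $\beta\ne 0$ the equation $L_u(\beta)=0$ is equivalent to $\beta$ satisfying \eqref{*}, now reading $u$ as the fixed datum and $\beta$ as the unknown. In particular, if $u^{q+1}+u+1\ne0$, the first implication of Theorem~\ref{theo01} shows that no nonzero $\beta\in\mathbb F_{q^3}$ satisfies \eqref{*}, so $\mathbf{Ker}(L_u)=\{0\}$ and $L_u$ is a permutation polynomial; this proves~(i).

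For the remaining parts we always have $u^{q+1}+u+1=0$, whence $u\ne0$ and $L_u$ is a nonzero $q$-linearized polynomial whose leading term (in $\beta$) is $u\beta^{q^2}$. By the standard theory of linearized polynomials \cite{m1993} its set of roots is an $\mathbb F_q$-subspace of $\overline{\mathbb F_q}$ of dimension at most $2$, so $\mathbf{dim}_{\mathbb F_q}(\mathbf{Ker}(L_u))\le2$. It therefore suffices, for (ii), (iii) and (iv) simultaneously, to produce in each case two $\mathbb F_q$-linearly independent elements of $\mathbf{Ker}(L_u)$.

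The explicit roots come from the preceding lemmas. Using $u^{q+1}=u+1$ together with Lemma~\ref{lem:q+1}(ii) (so $u^q=(u+1)/u$ and $u^{q^2}=1/(u+1)$), a one-line substitution gives $L_u(u^q)=0$; and if $\beta_2\in\mathbb F_{q^3}$ satisfies $\beta_2^{q-1}=u^2$ — which exists by Lemma~\ref{lem:q+1}(iii) — then $\beta_2^q=u^2\beta_2$ and $\beta_2^{q^2}=u^{2q+2}\beta_2=(u+1)^2\beta_2$, so substituting into \eqref{eq:L_u} yields $L_u(\beta_2)=0$. In the situations of part~(iii), Lemma~\ref{lem:2beta} tells us that $\beta_1:=u^q$ and $\beta_2$ are $\mathbb F_q$-linearly independent, so with the dimension bound $\mathbf{Ker}(L_u)=\mathbb F_q\beta_1+\mathbb F_q\beta_2$ and it is $2$-dimensional, which settles (iii) and this instance of (ii). In the situation of part~(iv), $s\equiv1\pmod3$ forces $\mathbb F_{2^3}\cap\mathbb F_q=\mathbb F_2$ and, with $u^3+u+1=0$, one gets $u\in\mathbb F_{2^3}$, $u^q=u^2$ and $u^{q^2}=u^4=u^2+u$; substituting these into \eqref{eq:L_u} shows $L_u(u^2)=0$ and $L_u(u+1)=0$, and by Lemma~\ref{lem:2eq}(ii) these two roots are $\mathbb F_q$-linearly independent, giving $\mathbf{Ker}(L_u)=\mathbb F_q u^2+\mathbb F_q(u+1)$, of dimension $2$; this settles (iv) and the last instance of (ii).

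I do not expect a deep obstacle, only careful bookkeeping. The points that need genuine attention are: (a) checking that the two root computations $L_u(u^q)=0$ and $L_u(\beta_2)=0$ use \emph{only} the hypothesis $u^{q+1}+u+1=0$, so that they remain valid across cases (ii)--(iii); (b) justifying $u^q=u^2$ in case~(iv), which relies on $u\in\mathbb F_{2^3}$ and on $3$ and $s$ being coprime; and (c) explaining why case~(iv) requires its own description rather than following from (iii): there one has $\beta_1^{q-1}=\beta_2^{q-1}=u^2$, so $\beta_1$ and $\beta_2$ are $\mathbb F_q$-dependent and do not span the $2$-dimensional kernel, which is precisely why $u+1$ must be taken as the second generator.
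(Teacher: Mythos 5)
Your proposal is correct and follows essentially the same route as the paper: part (i) via the contrapositive of Theorem \ref{theo01} applied to the identity $L_u(\beta)=\beta u^2+\Tr_{q^3/q}(\beta)u+\beta^q$, and parts (ii)--(iv) by exhibiting the same explicit roots $u^q$, $\beta_2$ (resp.\ $u^2$, $u+1$), invoking Lemmas \ref{lem:q+1}, \ref{lem:2eq} and \ref{lem:2beta} for their linear independence, and capping the kernel dimension at $2$ by the degree of the linearized polynomial. Your closing remark (c), explaining that in case (iv) one has $\beta_1^{q-1}=\beta_2^{q-1}=u^2$ so the generators of (iii) collapse, is a correct and welcome clarification that the paper leaves implicit in Lemma \ref{lem:2beta}.
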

\begin{proof}
We first prove (iii) and (iv). Then we prove (i) and (ii).

(iii) From Part (iii) of Lemma \ref{lem:q+1}, $\beta_2$ exists. Substitute $\beta=\beta_1$ into $L_u$.
\begin{align*}
L_{u}(\beta_1)=& u \left (u^q \right )^{q^2}+ (u+1)\left (u^q \right )^q+ (u^2+u) u^q\\
=& u^2+1+(u+1)^2\\
=& 0,
\end{align*}
where the second equation follows from Part (ii) of Lemma \ref{lem:q+1}.

Substitute $\beta=\beta_2$ into $L_u$.
\begin{align*}
L_{u}(\beta_2)=& u \beta_2^{q^2}+ (u+1)\beta_2^q+ (u^2+u) \beta_2\\
=& \beta_2\left [ u \beta_2^{q^2-1}+ (u+1)\beta_2^{q-1}+ (u^2+u)  \right ]\\
=& \beta_2\left [ u \left ( u^2 \right )^{q+1}+ (u+1)u^2+ (u^2+u)  \right ]\\
=& \beta_2\left [ u (u+1)^2+ (u+1)u^2+ (u^2+u)  \right ]\\
=& 0.
\end{align*}

From Lemma \ref{lem:2beta}, $\beta_1$ and $\beta_2$ are linearly independent over $\mathbb F_q$.
Thus, $\{x\beta_1+ y \beta_2: x,y \in \mathbb F_q\} \subseteq \mathbf{Ker} (L_u)$ and $\# \mathbf{Ker} (L_u) \ge q^2$.
By $deg(L_u(\beta))=q^2$, $\# \mathbf{Ker} (L_u) \le q^2$. Then, $\mathbf{Ker} (L_u) = \{x\beta_1+ y \beta_2: x,y \in \mathbb F_q\}  $.

(iv) By Part (ii) of Lemma \ref{lem:2eq}, $u^{q+1}+u+1=0$. Thus, $u^q=u^2$. Substitute $\beta=u^2$ into $L_u$.
\begin{align*}
L_{u}(u^2)=& u \left (u^2 \right )^{q^2}+ (u+1)\left (u^2 \right )^q+ (u^2+u) u^2\\
=& u^2+(u+1)u^4+u^4+u^3\\
=& u^2\left ( u^3+u +1 \right )\\
=& 0.
\end{align*}

Substitute $\beta=u+1$ into $L_u$.
\begin{align*}
L_{u}(u+1)=& u \left (u+1 \right )^{q^2}+ (u+1)\left (u+1 \right )^q+ (u^2+u) (u+1)\\
=& u \left (u+1 \right )^{4}+ (u+1)\left (u+1 \right )^2+ (u^2+u) (u+1)\\
=& (u+1)^2 \left ( u^3+u+ u+1+u \right )\\
=& 0.
\end{align*}
From Part (ii) of Lemma \ref{lem:2eq}, $u^2$ and $u+1$ are linearly independent over $\mathbb F_q$.
Thus, $\{x u^2+ y (u+1): x,y \in \mathbb F_q\} \subseteq \mathbf{Ker} (L_u)$ and $\# \mathbf{Ker} (L_u) \ge q^2$.
By $deg(L_u(\beta))=q^2$, $\# \mathbf{Ker} (L_u) \le q^2$. Then, $\mathbf{Ker} (L_u) = \{x u^2+ y (u+1): x,y \in \mathbb F_q\}  $.

(i) Let $u^{q+1}+u+1\neq 0$. Assume that $L_u$ is not a permutation polynomial. Then, there exists $\beta \in \mathbb F_{q^3}^*$ such that $L_u(\beta)=0$, that is
\begin{align*}
u^2+(\beta ^{q^2-1}+\beta ^{q-1}+1)u+\beta ^{q-1}=0.
\end{align*}
By Theorem \ref{theo01}, $u^{q+1}+u+1= 0$, which  contradicts with $u^{q+1}+u+1\neq 0$. Hence, $L_u$ is a  permutation polynomial. (ii) It follows from (i), (iii) and (iv).

\end{proof}

\begin{corollary}
Let  $u\in \mathbb F_{q^3}$ and $L_u(\beta)$ be the linear polynomial defined as in Equation (\ref{eq:L_u}). Then,
$L_u(\beta)$ is a linear permutation polynomial, if and only if, $u^{q+1}+u+1\neq  0$.
\end{corollary}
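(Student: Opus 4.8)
The plan is to deduce the corollary directly from Theorem~\ref{thm:L_u}, which has already done all the substantive work; the corollary is merely a clean repackaging of its parts (i) and (ii). Recall that a linearized polynomial over $\mathbb F_{q^3}$ is a permutation of $\mathbb F_{q^3}$ exactly when its kernel (as an $\mathbb F_q$-linear map) is trivial, i.e.\ $\mathbf{Ker}(L_u)=\{0\}$.

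First I would treat the ``if'' direction. Suppose $u^{q+1}+u+1\neq 0$. Then part~(i) of Theorem~\ref{thm:L_u} asserts verbatim that $L_u(\beta)$ is a linear permutation polynomial, so there is nothing further to prove. (Internally this is where the real content lives: part~(i) is proved by contradiction using Theorem~\ref{theo01}, i.e.\ the now-established Menezes--Teske--Weng conjecture, since a nonzero root of $L_u$ would yield a solution $u$ of Equation~\eqref{*} and hence force $u^{q+1}+u+1=0$.)

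Next I would treat the ``only if'' direction by contraposition: assume $u^{q+1}+u+1=0$ and show $L_u$ is not a permutation. By part~(ii) of Theorem~\ref{thm:L_u}, $\mathbf{dim}_{\mathbb F_q}(\mathbf{Ker}(L_u))=2$, so in particular $\mathbf{Ker}(L_u)\neq\{0\}$ (it has $q^2>1$ elements). A linear map with nontrivial kernel is not injective, hence $L_u$ is not a permutation polynomial. Combining the two directions gives the stated equivalence.

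There is essentially no obstacle here: every ingredient is already available in the excerpt, and the argument is just a two-line synthesis of Theorem~\ref{thm:L_u}(i) and~(ii). If one wanted to be fully self-contained one could also invoke parts~(iii)--(iv) to exhibit an explicit nonzero element of $\mathbf{Ker}(L_u)$ (namely $\beta_1=u^q$, or $u^2$ in the exceptional case $s\equiv 1\pmod 3$, $u^3+u+1=0$), but this is not needed once part~(ii) is granted.

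\begin{proof}
If $u^{q+1}+u+1\neq 0$, then by Part~(i) of Theorem~\ref{thm:L_u}, $L_u(\beta)$ is a linear permutation polynomial. Conversely, if $u^{q+1}+u+1= 0$, then by Part~(ii) of Theorem~\ref{thm:L_u} we have $\mathbf{dim}_{\mathbb F_q}(\mathbf{Ker}(L_u))=2$, so $\mathbf{Ker}(L_u)\neq\{0\}$ and $L_u$ is not injective on $\mathbb F_{q^3}$; in particular $L_u(\beta)$ is not a permutation polynomial. Hence $L_u(\beta)$ is a linear permutation polynomial if and only if $u^{q+1}+u+1\neq 0$.
\end{proof}
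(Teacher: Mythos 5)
Your proof is correct and matches the paper's approach exactly: the paper also derives the corollary directly from Theorem~\ref{thm:L_u}, with part~(i) giving the ``if'' direction and part~(ii) (nontrivial kernel, hence non-injectivity) giving the ``only if'' direction. Your write-up simply makes explicit the two-line synthesis the paper leaves implicit.
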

\begin{proof}
It follows from Theorem \ref{thm:L_u}.
\end{proof}

\begin{corollary}\label{cor:q^2}
    Let $q=2^s$, $\beta\in \mathbb{F}_{q^3}^*$. Then, $u^2+(\beta ^{q^2-1}+\beta ^{q-1}+1)u+\beta ^{q-1}=0$ holds for $u\in \mathbb{F}_{q^3}$ with $u^{q+1}+u+1=0$,
    if and only if, $\beta=x \beta_1 + y \beta_2$, where $x, y\in \mathbb F_q$, and , $\beta_1=u^q$, $\beta_2^{q-1}=u^2$ if $u^3+u+1 \neq 0$, or,
     $\beta_1=u^2$, $\beta_2=u+1$ if $u^3+u+1 = 0$.

\end{corollary}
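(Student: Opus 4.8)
The plan is to derive this corollary directly from Theorem \ref{thm:L_u}, since the corollary is essentially a restatement of parts (iii) and (iv) of that theorem in terms of the variable $\beta$. First I would observe that for a fixed $u\in\mathbb F_{q^3}$ with $u^{q+1}+u+1=0$, the quadratic equation $u^2+(\beta^{q^2-1}+\beta^{q-1}+1)u+\beta^{q-1}=0$ is, after multiplying through by $\beta^{q^2}$ (note $\beta\neq 0$), precisely the statement $L_u(\beta)=0$; that is, $\beta$ satisfies the displayed equation if and only if $\beta\in\mathbf{Ker}(L_u)$. This is the same rewriting used in the proof of part (i) of Theorem \ref{thm:L_u}, now read in the reverse direction.

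Next I would split into the two cases according to whether $u^3+u+1=0$. If $u^3+u+1\neq 0$, then by Lemma \ref{lem:2eq}(i) we cannot be in the regime $s\equiv 1\pmod 3$ with $u^3+u+1=0$, so the hypotheses of Theorem \ref{thm:L_u}(iii) are met (either $s\equiv 0,2\pmod 3$, or $s\equiv 1\pmod 3$ and $u^3+u+1\neq 0$), giving $\mathbf{Ker}(L_u)=\{x\beta_1+y\beta_2: x,y\in\mathbb F_q\}$ with $\beta_1=u^q$ and $\beta_2^{q-1}=u^2$. If instead $u^3+u+1=0$, then by Lemma \ref{lem:2eq}(i) we must have $s\equiv 1\pmod 3$, so Theorem \ref{thm:L_u}(iv) applies and $\mathbf{Ker}(L_u)=\{xu^2+y(u+1):x,y\in\mathbb F_q\}$. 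Combining the equivalence $\beta$ solves the quadratic $\iff \beta\in\mathbf{Ker}(L_u)$ with the explicit description of $\mathbf{Ker}(L_u)$ in each case yields exactly the claimed characterization of $\beta$.

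There is essentially no obstacle here: the only point requiring a moment's care is checking that the case split on $u^3+u+1$ correctly matches the case split in Theorem \ref{thm:L_u}, which is handled by invoking Lemma \ref{lem:2eq}(i) to rule out the incompatible combination $s\not\equiv 1\pmod 3$ with $u^3+u+1=0$. One should also note that in the case $u^3+u+1=0$ the element $u^2$ plays the role of $u^q$ (since $u^q=u^2$ there by Lemma \ref{lem:2eq}(ii)) and $u+1$ may be taken as a valid choice of $\beta_2$ (since $(u+1)^{q-1}=u^2$ can be verified from $u^3+u+1=0$), so the two stated descriptions are consistent. Thus the proof is a two-line deduction: rewrite the equation as $L_u(\beta)=0$, then quote Theorem \ref{thm:L_u}(iii)--(iv).
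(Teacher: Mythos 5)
Your proof is correct and follows essentially the same route as the paper, which simply derives the corollary from Theorem \ref{thm:L_u}: the equation is equivalent to $L_u(\beta)=0$, and the kernel descriptions in parts (iii) and (iv) give the two cases. Two immaterial slips: the multiplier turning the quadratic into $L_u(\beta)=0$ is $\beta$, not $\beta^{q^2}$ (indeed $\beta\cdot\bigl(u^2+(\beta^{q^2-1}+\beta^{q-1}+1)u+\beta^{q-1}\bigr)=L_u(\beta)$); and your parenthetical claim that $(u+1)^{q-1}=u^2$ when $u^3+u+1=0$ is false --- there $(u+1)^{q-1}=(u^2+1)/(u+1)=u+1=u^3\neq u^2$ --- but this consistency remark is not needed, since in that case you correctly invoke Theorem \ref{thm:L_u}(iv) directly, whose kernel $\{xu^2+y(u+1)\}$ is exactly what the corollary asserts.
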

\begin{proof}
It follows from Theorem \ref{thm:L_u}.
\end{proof}

\begin{corollary}
    Let $q=2^s$. Then, for any $u\in \mathbb{F}_{q^3}$ with $u^{q+1}+u+1=0$, there exactly exist $(q^2-1)$ $\beta$'s with $\beta \in \mathbb F_{q^3}^*$
    such that $u^2+(\beta ^{q^2-1}+\beta ^{q-1}+1)u+\beta ^{q-1}=0$ holds.
\end{corollary}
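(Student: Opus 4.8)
The plan is to read the corollary off as a pure counting statement from Theorem~\ref{thm:L_u} together with the dictionary between solutions of~\eqref{*} and the kernel of the linear polynomial $L_u$ that was set up in Section~3. First I would record the identity on which that dictionary rests: for $\beta\in\mathbb F_{q^3}^\star$ one has $L_u(\beta)=\beta\,\bigl(u^2+(\beta^{q^2-1}+\beta^{q-1}+1)u+\beta^{q-1}\bigr)$, so that, for a fixed $u\in\mathbb F_{q^3}$, a nonzero element $\beta$ satisfies~\eqref{*} if and only if $\beta\in\mathbf{Ker}(L_u)$. Since $L_u(0)=0$ automatically, the set of $\beta\in\mathbb F_{q^3}^\star$ that must be enumerated is exactly $\mathbf{Ker}(L_u)\setminus\{0\}$.

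Next I would invoke the hypothesis $u^{q+1}+u+1=0$. By part~(ii) of Theorem~\ref{thm:L_u} this forces $\mathbf{dim}_{\mathbb F_q}(\mathbf{Ker}(L_u))=2$, hence $\#\mathbf{Ker}(L_u)=q^2$ and therefore $\#\bigl(\mathbf{Ker}(L_u)\setminus\{0\}\bigr)=q^2-1$, which is the asserted number. Alternatively the same conclusion follows from Corollary~\ref{cor:q^2}: there the solution set of~\eqref{*} is described as $\{x\beta_1+y\beta_2:x,y\in\mathbb F_q\}$ with $(\beta_1,\beta_2)$ chosen according to whether $u^3+u+1=0$, and this set has exactly $q^2$ elements because $\beta_1$ and $\beta_2$ are $\mathbb F_q$-linearly independent --- Lemma~\ref{lem:2beta} gives this when $u^3+u+1\neq0$ and part~(ii) of Lemma~\ref{lem:2eq} gives it when $u^3+u+1=0$ --- exactly one of which, namely $0$, is excluded on passing to $\mathbb F_{q^3}^\star$.

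There is essentially no obstacle here: the substantive work is already contained in Theorem~\ref{thm:L_u}, and what remains is the trivial observation that a two-dimensional $\mathbb F_q$-subspace of $\mathbb F_{q^3}$ has $q^2$ elements while $\mathbb F_{q^3}^\star=\mathbb F_{q^3}\setminus\{0\}$ removes precisely one of them. The only point worth a sentence is that $\beta=0$ genuinely lies in that subspace, so that the count drops by exactly one and not by zero; this is immediate since $L_u$ is a linearized polynomial with vanishing constant term. I would therefore present the proof in two or three lines: cite Theorem~\ref{thm:L_u}(ii) for $\#\mathbf{Ker}(L_u)=q^2$, recall that~\eqref{*} is equivalent to $\beta\in\mathbf{Ker}(L_u)$ for $\beta\neq0$, and subtract the zero vector.
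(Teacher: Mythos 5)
Your proposal is correct and follows essentially the same route as the paper, which simply cites Corollary~\ref{cor:q^2} (itself a restatement of Theorem~\ref{thm:L_u}): the solution set is a two-dimensional $\mathbb F_q$-subspace of $\mathbb F_{q^3}$ minus the origin, giving $q^2-1$ nonzero $\beta$'s. Your explicit verification of the factorization $L_u(\beta)=\beta\bigl(u^2+(\beta^{q^2-1}+\beta^{q-1}+1)u+\beta^{q-1}\bigr)$ and of the fact that removing $\beta=0$ drops the count by exactly one is a welcome spelling-out of details the paper leaves implicit.
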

\begin{proof}
It follows from Corollary \ref{cor:q^2}.
\end{proof}

\section{Conclusion}
In this article, we prove a conjecture suggested 15 years ago  by Alfred Menezes, Edlyn Teske and Annegret Weng. Such a conjecture is related to the Discrete Logarithm Problem on  elliptic curves.
We also use the proved conjecture to completely determined the null space of a class of linear polynomials.

\section* {Acknowledgements}

The authors deeply thank Alfred Menezes for checking our proof of the MTW conjecture.



\end{document}